\documentclass[12pt]{article}
\usepackage[iso-8859-7]{inputenc}
\usepackage{amsfonts}
\usepackage{graphicx}
\usepackage{url}
\usepackage{amsthm}
\usepackage{color}
\usepackage{braket}
\usepackage{extarrows}
\usepackage{titling}

\newtheorem{lemma}{Lemma}

\newtheorem{proposition}{Proposition}
\newtheorem{protocol}{Protocol}

\linespread{1.15} 

\setlength{\parskip}{10pt}

\DeclareMathOperator{\Tr}{Tr}

\def\ppmod{\!\!\!\!\!\pmod}

\def\poly{poly}
\def\span{span}
\def\O{\textbf{\textit{O}}}

\date{}

\begin{document}
\title{Quantum walks public key\\ cryptographic system}

\author{
C. Vlachou
\hspace*{.1cm}
J. Rodrigues
\hspace*{.1cm}
P. Mateus\\
\hspace*{.1cm}
 N. Paunkovi\'c
\hspace*{.1cm}
A. Souto
\\ \ \\
SQIG - Instituto de Telecomunica\c c\~oes\\
Departamento de Matem\'atica - IST - UL\\
}

\maketitle

\begin{abstract}

Quantum Cryptography is a rapidly developing field of research that benefits from the properties of Quantum Mechanics in performing cryptographic tasks. Quantum walks are a powerful model for quantum computation and very promising for quantum information processing. In this paper, we present a quantum public-key cryptographic system based on quantum walks. In particular, in the proposed protocol the public key is given by a quantum state generated by performing a quantum walk. We show that the protocol is secure and analyze the complexity of public-key generation and encryption/decryption procedures.
\end{abstract}

{\bf Keywords}: 
Quantum walks; Public-key cryptographic systems; Holevo bound

\section{Introduction}

Since the invention of writing, the need for secret/secure communication resulted in the development of cryptography -- the art of ``hidden communication''.  It started by using simple symbols as code words and evolved to the stage where the security is based on various mathematical hardness assumptions: a widely used RSA-based cryptographic system~\cite{riv:sha:adl:78} relies on the conjecture that factoring large numbers is not feasible using standard computers, while the alternative lattice-based public-key cryptographic system~\cite{gol:gol:hal:97} is based on the assumed difficulty of the so-called ``shortest and closest vector problems'' (also related to the well known $P\neq NP$ conjecture~\cite{fort:13}). 
With the advent of quantum computation, and in particular after the discovery of the celebrated Shor's algorithm for factoring~\cite{sho:97}, the security of most cryptographic systems currently in use became jeopardized and as a consequence the need for new cryptographic systems resilient to quantum adversaries arose. The above mentioned lattice-based cryptographic system is resilient to particular quantum adversaries that execute Shor's algorithm, as its security is not based on the factoring problem; nevertheless, it does rely on another mathematical assumption, thus leaving open the question of whether it is more than just computationally  secure.  A completely different approach is the one that takes advantage of quantum mechanics and assumes that both sender and receiver have access to a quantum computer.  The major advantage of this approach is that the security, rather than relying on mathematical/computational  hardness assumptions, is now based on the laws of quantum mechanics. 

Quantum cryptography has been among hot subjects of research in the last decades. It was firstly considered by Wiesner in the late sixties and early seventies, who introduced the notions of quantum multiplexing and money (which was only published a decade later, see~\cite{wie:83}), and further developed by Bennet and Brassard in their famous Quantum Key Distribution (QKD) BB84 protocol~\cite{ben:bra:84}. 
Subsequently, Mayers showed the unconditional security of BB84 protocol
i.e., the security that is based on the laws of (quantum) physics (see~\cite{may:01}).  Rapid development of experimental techniques resulted in the implementation of QKD, see for example~\cite{sch:07},
and nowadays one can even buy devices implementing it on the online market from Clavis QKD Platform~\cite{clavis}.

As an alternative to QKD, another way for two parties to communicate privately has been developed, using quantum public-key cryptosystems: in~\cite{oka:tan:uch:00} the authors propose a scheme based on quantum trapdoor one-way functions, and in~\cite{kaw:etal:05} a system is presented whose security is based on (quantum) computational indistinguishability of quantum states. More recently, Nikolopoulos~\cite{nik:08} presented a secure public-key cryptographic system based on single-qubit rotations. 

In this paper we propose a new quantum public-key cryptographic system in which the public keys are quantum states generated by running a quantum walk, rather than by performing single-qubit rotations, while the private key consists of: (i) the quantum walk, (ii) the number of steps of the walk and (iii) the initial state. 
Regarding the use of quantum walks in cryptography, see for example the recent work by Yan {\it et al}~\cite{yan:pan:sun:xu:15}.

The paper is organized as follows: 
in the next section we start with basic definitions and notation used throughout the paper. 
In Section~\ref{sec:pkscheme} we present our public-key cryptographic system and discuss its security and efficiency.
In Section~\ref{sec:conclusions} we summarize the results obtained and point some future research direction.

\section{Quantum walks}
The concept of random walk was introduced to describe the behavior of a ``walker'' over a path who, at each step, can choose to follow one of the possible directions with a certain {\em a priori} fixed probability. It was shown to be very useful in computer science (sampling massive online graphs, image segmentation, estimating the size of the World Wide Web, wireless networking, etc.), physics (modeling the Brownian motion, studying polymers, etc.), and many other fields of research (financial economics, medicine and biology, psychology, etc.).

Quantum walks are the quantum counterpart of random walks and were first considered in ~\cite{aha:dav:zag:93}. Unlike the classical case, in which the state of the walker is described by a probability distribution over the allowed positions, in the quantum scenario the state of the walker is given by a superposition of positions. Quantum walks are a successful tool in algorithmic theory since they provide an exponential speedup over classical computations for several oracular problems~\cite{chi:etal:03,kem:03,amb:03} and polynomial speedup for many important classical solutions of problems such as determining whether all the elements of a list are distinct~\cite{san:08}, and in search problems in general~\cite{por:13}.
Moreover, they are an important computational primitive, since they permit universal quantum computation~\cite{lov:coo:eve:tre:ken:10,chi:gos:web:13}. For practical implementations, see for example the recent book~\cite{man:wan:13}.
One can study different types of quantum walks, determined by their time evolution (discrete- vs continuous-time), underlying position space and their topology (walks on the line, lattice, circle, graphs, etc.). In this work we use discrete-time quantum walks on the circle.

\label{sec:preliminaries}
\subsection{One-dimensional discrete-time quantum walks}

\subsubsection{Basic dynamics}
In a discrete-time quantum walk on an infinite line, we consider the movement of a walker along discrete positions on  it, labeled $i \in \mathbb{Z}$. At each step the particle moves to the left or to the right, depending on the state of the internal degree of freedom, commonly called the coin state. The position and the coin state of the particle are from the Hilbert spaces 
$\mathcal{H}_p =\span \{\Ket{i}: i \in \mathbb{Z}\}$ 
and 
$\mathcal{H}_c = \span\{ \ket{R}, \Ket{L} \}$, respectively.

The evolution of the system at each step of the walk is described by the unitary operator:
\begin{equation}
\hat{U} = \hat{S}\left(\hat{I}_p \otimes \hat{U}_c \right).
\end{equation}
In the above expression $\hat{I}_p$ is the identity operator on $\mathcal{H}_p$, $\hat{S}$ is the shift operator
\begin{equation}
\hat{S} = \sum_{i\in\mathbb{Z}}
			\left(\Ket{i+1}\Bra{i}  \otimes \Ket{R} \Bra{R} + 
					\Ket{i-1}\Bra{i} \otimes \Ket{L} \Bra{L}\right),
\end{equation}
and $\hat{U}_c \in U(2)$ is the coin operator acting on $\mathcal{H}_c$. 
The general expression for $\hat{U}_c$ is:
\begin{equation}
\label{coin}
\hat{U}_c = \hat{U}_c(\theta, \xi, \zeta) =
			\left[
				\begin{array}{cc}
					e^{i\xi}
					\cos\theta & e^{i\zeta}
					\sin\theta\\
					-e^{-i\zeta}
					\sin\theta & e^{-i\xi}
					\cos\theta
				\end{array}
			\right].
\end{equation}
 
\subsubsection{Shift operator on the circle}

To simulate the walk on a circle, one could either identify positions $-N$ and $N$ of the above discussed line, or connect the two, thus altering the corresponding shift operator. In the former, the circle has an even number of positions ($2N$), while in the latter it has an odd number of positions ($2N+1$). 

Without loss of generality, we consider the position Hilbert space 
$\mathcal{H}_p =\span\{\Ket{i}:i\in\{0,\dots,N-1\}\}$. 
The general expression for a shift operator on a circle with $N$ positions is:
\begin{eqnarray}
\label{shift}
\hat{S} &=&  \sum_{i= 0}^{N-1} \Big(\Ket{i+1 \ppmod {N}}\Bra{i}  \otimes \Ket{R} \Bra{R} \nonumber\\
		&& \hspace*{10mm}+  \Ket{i-1 \ppmod N}\Bra{i}  \otimes \Ket{L} \Bra{L} \Big)\\
		&=& \hat{T}_1 \otimes \Ket{R} \Bra{R} + \hat{T}_{-1} \otimes \Ket{L} \Bra{L}\nonumber
\end{eqnarray}
where 
\begin{equation}
\hat{T}_m = \sum_{i=0}^{N-1}\ket{ i+m \ppmod N}\Bra{i}
\end{equation}
is the $m-$position translation operator.

\section{Public-key encryption based on discrete-time quantum walks}
\label{sec:pkscheme}
We now present the public-key cryptographic system using quantum walks on a circle. To generate the public key, we use a discrete number of possible walks $\hat{U}_k = \hat{S} [ \hat{I} \otimes \hat{U}_c (\theta_k, \xi_k, \zeta_k)]$, with $\theta_k = \xi_k = \zeta_k = k \frac{2\pi}{d}$, $k~\in~\mathcal I=\{1,2,\ldots,d\}$ and $d \in \mathbb N$.

\begin{protocol}[Public-key encryption scheme]\
\label{prot:pk} 
\begin{description}
\item[\hspace{3mm}{\em Inputs for the protocol}]\
	\begin{itemize}
		\item Message to transfer: 

			$m\in \{0, \dots, 2^n-1 \}$, i.e., a message of at most $n$ bits;

		\item Secret key $SK = (\hat U_k, t, l)$ where:

			$\hat{U}_k$ with $k~\in~\mathcal I=\{1,2,\ldots,d\}$, 
			$t~\in~\mathcal{T} = \{ t_0, \dots, t_{max} \} \subset \mathbb{N}$ and 
			$l \in \{ 0, \dots, 2^n-~1\} $.
	\end{itemize}

\item[\hspace{3mm} {\em Public-key generation}]\
	\begin{itemize}
	\item Alice chooses uniformly at random $l \in \{0, \dots ,2^n-1 \} $ and 
	$s \in\{ L,R\}$, and generates the initial state $\Ket{l}\Ket{s}$;
	
	\item Then she chooses, also uniformly at random, the walk 
	$\hat{U}_k = \hat{S} (\hat{I}_p \otimes \hat U_{c})$ 
	and the number of steps $t\in\mathcal T$;   
	
	\item Finally, she generates the public key:
	\begin{equation}
	\Ket{\psi_{PK}} 	= \hat{U}_k^t\Ket{l}\Ket{s} 
				= \left[\hat{S} (\hat{I}_p \otimes \hat U_{c})\right]^t \Ket{l}\Ket{s}.
	\end{equation}
	\end{itemize}

\item[\hspace{3mm} {\em Message Encryption}]\
	\begin{itemize}
	
	\item Bob obtains Alice's public key $\Ket{\psi_{PK}}$;
	
	\item He encrypts $m$ by applying spatial translation to obtain: 
		\begin{equation}
		\Ket{\psi(m)} = (\hat{T}_m\otimes \hat{I}_c) \Ket{\psi_{PK}};
		\end{equation} 
		
	\item Bob sends $\Ket{\psi(m)}$ to Alice.
	\end{itemize}
	
	\item[\hspace{3mm} {\em Message Decryption}]\
	\begin{itemize}
	\item Alice applies $\hat{U}_k^{-t}$ to the state $\Ket{\psi(m)}$;

	\item She performs the measurement 
	\begin{equation}
	\hat{M} = \sum_{i}\Ket{i} \Bra{i}\otimes \hat{I}_c
	\end{equation} 
	and obtains the result $m'$. The message sent by Bob is $m= m' - l\!\!\!\!\pmod N$.
\end{itemize}
\end{description}
\end{protocol}

\subsection{Correctness of the protocol}

\begin{proposition}
The above protocol  is correct, in the sense that if Alice and Bob follow it, and no third party intervenes during its execution, at the end of the decryption phase Alice recovers the message sent by Bob with probability $1$.
\end{proposition}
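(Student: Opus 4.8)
The plan is to reduce the entire correctness claim to a single commutation relation: that Bob's encryption operator $\hat{T}_m \otimes \hat{I}_c$ commutes with Alice's walk operator $\hat{U}_k$, and therefore with every power $\hat{U}_k^t$ (and with $\hat{U}_k^{-t}$). Granting this, the decryption calculation is immediate. Alice receives the ciphertext $\Ket{\psi(m)} = (\hat{T}_m \otimes \hat{I}_c)\,\hat{U}_k^t\Ket{l}\Ket{s}$ and applies $\hat{U}_k^{-t}$; commuting the inverse walk past the translation yields $\hat{U}_k^{-t}(\hat{T}_m \otimes \hat{I}_c)\hat{U}_k^t\Ket{l}\Ket{s} = (\hat{T}_m \otimes \hat{I}_c)\Ket{l}\Ket{s} = \Ket{l+m \ppmod N}\Ket{s}$. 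Since this is a position eigenstate tensored with a coin state, the measurement $\hat{M}$ returns the outcome $m' = l+m \ppmod N$ with certainty, i.e.\ with probability $1$, and hence $m = m' - l \ppmod N$ exactly as the protocol prescribes.

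The core of the argument is thus establishing $[\hat{T}_m \otimes \hat{I}_c,\ \hat{U}_k] = 0$, and the step I expect to carry the proof is the observation that the position-translation operators form an abelian group under composition. Directly from their definition as cyclic shifts modulo $N$ one checks that $\hat{T}_m \hat{T}_{m'} = \hat{T}_{m+m'} = \hat{T}_{m'}\hat{T}_m$; in particular $\hat{T}_m$ commutes with both $\hat{T}_1$ and $\hat{T}_{-1}$. This is the essential algebraic fact, and everything else is bookkeeping.

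Using this, I would verify commutation with each factor of $\hat{U}_k = \hat{S}(\hat{I}_p \otimes \hat{U}_c)$ separately. For the coin factor $\hat{I}_p \otimes \hat{U}_c$ the commutation is automatic, since $\hat{T}_m \otimes \hat{I}_c$ and $\hat{I}_p \otimes \hat{U}_c$ act nontrivially on complementary tensor factors. For the shift operator I would expand $\hat{S} = \hat{T}_1 \otimes \Ket{R}\Bra{R} + \hat{T}_{-1} \otimes \Ket{L}\Bra{L}$ and push $\hat{T}_m$ through each summand using the abelian relation of the previous step, the coin projectors being left untouched; this shows $(\hat{T}_m \otimes \hat{I}_c)\hat{S} = \hat{S}(\hat{T}_m \otimes \hat{I}_c)$. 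Combining the two factors gives $[\hat{T}_m \otimes \hat{I}_c,\ \hat{U}_k] = 0$, and a one-line induction extends this to $\hat{U}_k^t$ and its inverse, closing the argument.

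The only point to flag, which I expect to be harmless, is that all the arithmetic is modular: $\hat{T}_m$ wraps around the circle of $N$ positions, so the recovered position index and the decrypted message are naturally read modulo $N$. This is precisely why the final step of the protocol recovers $m = m' - l \ppmod N$ rather than an ordinary integer difference, and it is consistent throughout since $m \in \{0,\dots,2^n-1\}$ is chosen within the available range of positions.
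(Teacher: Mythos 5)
Your proposal is correct, and its top-level structure matches the paper's exactly: reduce correctness to the commutation $[\hat{T}_m\otimes\hat{I}_c,\,\hat{U}_k^t]=0$, conclude that the pre-measurement state is the position eigenstate $\Ket{l+m \ppmod N}\Ket{s}$, and observe that $\hat{M}$ then yields $m'=l+m \ppmod N$ deterministically, so the final modular subtraction recovers $m$ (using $m<2^n\leq N$, which you correctly flag).

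Where you diverge is in how the commutation lemma itself is proved. The paper verifies it by brute force on basis states: it writes $\hat{U}_k\Ket{l}\Ket{s}=\alpha_{L(s)}\Ket{l-1}\Ket{L}+\alpha_{R(s)}\Ket{l+1}\Ket{R}$ and checks that applying $\hat{T}_m\otimes\hat{I}_c$ before or after $\hat{U}_k$ gives the same superposition. You instead argue at the operator level: $\hat{T}_m\otimes\hat{I}_c$ commutes with $\hat{I}_p\otimes\hat{U}_c$ because they act on complementary tensor factors, and it commutes with $\hat{S}=\hat{T}_1\otimes\Ket{R}\Bra{R}+\hat{T}_{-1}\otimes\Ket{L}\Bra{L}$ because the cyclic translations $\{\hat{T}_m\}$ form an abelian group. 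Both arguments are valid and of comparable length, but yours makes transparent the remark the paper only states after its lemma --- that the result extends to any shift operator built as a sum of translations controlled by coin projectors (jumps of two or more positions, or no move at all): your factor-by-factor argument needs no modification for such generalizations, whereas the paper's basis-state computation would have to be redone. The one point worth making explicit in your write-up is that $\hat{T}_{-1}$ means $\hat{T}_{N-1}$, so it genuinely lies in the abelian group generated by $\hat{T}_1$ and your relation $\hat{T}_m\hat{T}_{m'}=\hat{T}_{m+m'}$ applies with indices read modulo $N$.
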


\begin{proof}
The correctness of the protocol when Alice and Bob follow the prescribed steps is a direct consequence of the fact that the quantum walk $\hat{U}_k^t$ commutes with any translation $\hat{T}_m$. Thus, the state of the system before the final step of the decryption phase (measurement), is: 
\begin{eqnarray}
\Ket {\psi_f}	&=& \hat{U}_k^{-t}\Ket{\psi(m)} \nonumber\\
			&=& \hat{U}_k^{-t} (\hat{T}_m \otimes \hat{I}_c) \hat{U}_k^t \Ket{l}\Ket{s} \\
			&=& (\hat{T}_m \otimes \hat{I}_c) \Ket{l}\Ket{s} \nonumber\\
			&=& \Ket{l+m  \ppmod N}\Ket{s}.\nonumber
\end{eqnarray}
Hence, upon measuring $\hat{M}$ and obtaining $m' = l+m \!\!\pmod N$, the last modular operation performed in the last step of the {\em Message Decryption} reveals that the decrypted message is indeed $m$. 
\end{proof}

In the previous Proposition the crucial point is the fact that the quantum walk commutes with the translation. Below, we prove in detail that $\hat{U}_k^t$ and $(\hat{T}_m \otimes \hat{I}_c)$ commute. 

\begin{lemma}
Let $N\geq 2^n$ where $n$ is a fixed integer. 
Let $\hat{U}_k^t$ be a quantum walk from Protocol~\ref{prot:pk} and let $\hat{T}_m$ denote the translation operator for $m$ positions modulo $N$. Then $\hat{U}_k^t$ and $(\hat{T}_m\otimes \hat{I}_c)$ commute.
\end{lemma}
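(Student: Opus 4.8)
The plan is to reduce the commutation of the full walk operator to the elementary fact that the translation operators on the circle form an abelian (cyclic) group. First I would record the key structural identity $\hat{T}_m \hat{T}_{m'} = \hat{T}_{m+m'}$, with indices taken modulo $N$, which follows immediately from the definition $\hat{T}_m = \sum_i \Ket{i+m \ppmod N}\Bra{i}$ by composing the two sums and relabelling the summation index. In particular this yields $\hat{T}_1 \hat{T}_m = \hat{T}_m \hat{T}_1$ and $\hat{T}_{-1}\hat{T}_m = \hat{T}_m \hat{T}_{-1}$, i.e.\ every translation commutes with the elementary translations $\hat{T}_{\pm 1}$ that appear in the shift operator.

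Next I would dispose of the coin degree of freedom. Since $(\hat{I}_p \otimes \hat{U}_c)$ and $(\hat{T}_m \otimes \hat{I}_c)$ act nontrivially on disjoint tensor factors, they commute trivially, as $(\hat{T}_m \otimes \hat{I}_c)(\hat{I}_p \otimes \hat{U}_c) = \hat{T}_m \otimes \hat{U}_c = (\hat{I}_p \otimes \hat{U}_c)(\hat{T}_m \otimes \hat{I}_c)$. It therefore remains only to check that the shift operator $\hat{S} = \hat{T}_1 \otimes \Ket{R}\Bra{R} + \hat{T}_{-1}\otimes \Ket{L}\Bra{L}$ commutes with $(\hat{T}_m \otimes \hat{I}_c)$. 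Expanding both orderings of the product and invoking the abelian relation from the first step, each summand matches term by term, because $\hat{T}_{\pm 1}\hat{T}_m = \hat{T}_m \hat{T}_{\pm 1}$ on the position factor while the coin projectors $\Ket{R}\Bra{R}$ and $\Ket{L}\Bra{L}$ are left untouched by $\hat{I}_c$.

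Combining these two observations shows that the single-step operator $\hat{U}_k = \hat{S}(\hat{I}_p \otimes \hat{U}_c)$ commutes with $(\hat{T}_m \otimes \hat{I}_c)$. The last step is a routine induction: if an operator $A$ commutes with $\hat{U}_k$, then it commutes with $\hat{U}_k^t$ for every $t \in \mathbb{N}$, and since $\hat{U}_k$ is unitary the relation $A\hat{U}_k = \hat{U}_k A$ may be conjugated by $\hat{U}_k^{-1}$ on both sides to extend it to negative powers $\hat{U}_k^{-t}$, which is exactly what the decryption step requires.

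I do not anticipate a genuine obstacle here, as every step is a direct computation; the only point demanding care is the modular bookkeeping in the composition law $\hat{T}_m \hat{T}_{m'} = \hat{T}_{m+m'}$. It is worth noting that the hypothesis $N \geq 2^n$ plays no role in the commutation itself: it is needed elsewhere, to guarantee that distinct messages $m \in \{0,\dots,2^n-1\}$ induce distinguishable translations modulo $N$, so that the decryption recovers $m$ unambiguously.
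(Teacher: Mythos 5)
Your proof is correct and reaches the paper's conclusion, but it is organized around a different decomposition. The paper verifies the single-step commutation $(\hat{T}_m\otimes\hat{I}_c)\hat{U}_k=\hat{U}_k(\hat{T}_m\otimes\hat{I}_c)$ by a direct computation on basis vectors: it writes $\hat{U}_k\Ket{l}\Ket{s}$ as a superposition of $\Ket{l-1}\Ket{L}$ and $\Ket{l+1}\Ket{R}$ and checks that translating before or after the step yields the same state. You instead argue at the operator level: you isolate the abelian composition law $\hat{T}_m\hat{T}_{m'}=\hat{T}_{m+m'}$ as the key fact, note that $\hat{I}_p\otimes\hat{U}_c$ commutes with $\hat{T}_m\otimes\hat{I}_c$ because they act on different tensor factors, and then match terms in the decomposition $\hat{S}=\hat{T}_1\otimes\Ket{R}\Bra{R}+\hat{T}_{-1}\otimes\Ket{L}\Bra{L}$. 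Your route buys two things the paper leaves implicit: it spells out the induction from $\hat{U}_k$ to $\hat{U}_k^t$ and the conjugation argument giving the negative powers that the decryption step actually uses (the paper's proof, as written, only treats a single step), and it makes immediate the remark following the lemma that the result extends to more general shift operators, since any coin-controlled combination of translations $\hat{T}_j$ commutes with $\hat{T}_m\otimes\hat{I}_c$ by exactly your term-by-term argument. Your closing observation that the hypothesis $N\geq 2^n$ plays no role in the commutation itself, but only in the unambiguous recovery of $m$, is also accurate.
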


\begin{proof}
Notice that the action of any $\hat U_k$ used in Protocol~\ref{prot:pk} can be written as:
\begin{eqnarray}
\hat U_k\Ket{l}\Ket{s} = \alpha_{L(s)} \Ket{l-1} \Ket{L} +\alpha_{R(s)} \Ket{l+1} \Ket{R},
\end{eqnarray}
where $\Ket{L}$ and $\Ket{R}$ are the orthogonal coin states and 
$\alpha_{L/R(s)}$ is the probability amplitude to find the walker in position $l-1$ or $l+1$, depending on its spin.
Notice also that $\hat{T}_m$ is defined as:
\begin{eqnarray}
\hat{T}_m \Ket{l} = \Ket{l+m \ppmod N}.
\end{eqnarray}
Then, for any element of the form $\Ket{l}\Ket{s}$ we have:
\begin{eqnarray}
(\hat{T}_m\otimes \hat{I}_c) \hat{U}_k  \Ket{l}\Ket{s}
	&=&(\hat{T}_m\otimes \hat{I}_c) [ \alpha_{L(s)} \Ket{l-1} \Ket{L} + \alpha_{R(s)} \Ket{l+1}\Ket{R} ]\nonumber\\
	&=& \alpha_{L(s)} \Ket{l-1+ m  \ppmod N} \Ket{L}\\ && + \alpha_{R(s)} \Ket{l+1 + m  \ppmod N} \Ket{R}.\nonumber
\end{eqnarray}
On the other hand, we also have:
\begin{eqnarray}
	\hat{U}_k(\hat{T}_m\otimes \hat{I}_c) \Ket{l}\Ket{s}
		&=& \hat{U}_k \Ket{l + m  \ppmod N}\Ket{s}\nonumber\\
		&=& \alpha_{L(s)} \Ket{l-1+ m  \ppmod N} \Ket{L} \\ 
		& &+ \alpha_{R(s)} \Ket{l+1 + m  \ppmod N} \Ket{R}.\nonumber
\end{eqnarray}
\end{proof}
Observe that this lemma can be extended to more general shift operations, which  allow jumps for two or more positions, or leave position state unchanged, depending on the coin state.

\subsection{Security of the protocol}

The protocol consists of two phases. In the first, Alice sends a public key $\Ket{\psi_{PK}}$ to Bob. In the second, upon encrypting the message $m$, Bob sends back the state $\Ket{\psi(m)}$ to Alice. Therefore, one has to show the security of the secret key during the first phase and the security of the message during the second phase.

Our proof of the security is based on Holevo's Theorem, that bounds the amount of classical information that an eavesdropper can retrieve from a given quantum mixed state by means of a POVM measurement.

Let us denote by $\hat{\rho}_{PK}$ the mixed state of the public key, as perceived by Eve, who does not know \textit{a priori} the secret key $SK$ chosen by Alice. 
Even if Eve were to know $\hat U_k$ and $t$, $\hat{\rho}_{PK}$ is completely mixed:
\begin{eqnarray}
\hat{\rho}_{PK} 
	&=& \hat{U}_k^t\left[\frac{1}{2^{n+1}} \sum_{l = 0}^{2^n-1} \sum_{s\in\{ L,R\}} \Ket{l} \Bra{l} \otimes \Ket{s} \Bra{s}\right](\hat{U}_k^t)^{\dagger}\nonumber\\
	&=& \hat{U}_k^t \left(\frac{1}{2^{n+1}} \hat{I}_p \otimes \hat{I}_c\right) (\hat{U}_k^t)^{\dagger}
	\\
	&=& \frac{1}{2^{n+1}} \hat{I}_p \otimes \hat{I}_c\nonumber.
\end{eqnarray}
Assuming that Eve performs a measurement on $\hat{\rho}_{PK}$, Holevo's Theorem implies that the mutual information $I(SK,E)$ between the secret key $SK$ and her inference $E$ is bounded from above by the Von Neumann entropy of this state:
\begin{eqnarray}
I(SK,E) \leq S(\hat{\rho}_{PK})=-\Tr(\hat{\rho}_{PK} \log\hat{\rho}_{PK} )=n+1.
\end{eqnarray}

To conclude that the protocol is secure we have to show that the mutual information is very small compared to the Shannon entropy of the secret key.
Indeed, the Shannon entropy of the secret key depends on the probability to choose $\hat U_k$, $t$ and the initial state $\ket l\ket s$.
In the following we denote by $p_k$ the probability to choose $\hat{U}_k$ from the set $\left\{\hat{U}_k | k\in \mathcal I= \{ 1,2,\dots,d\} \right\}$, by $p_t$ the probability to run the walk for $t$ steps, with $t \in \mathcal T=\{ t_0, \dots, t_{max} \}$, and by $p_{l,s}$ the probability to choose $\Ket{l}\ket s$ as the initial state, where $l\in\{0,1,\ldots,2^n-1\}$ and $s\in\{L,R\}$.
Since this choices are random and independent, the probability of a certain secret key $SK$ is given by:
\begin{equation}
p_{SK}=p_k \; p_t \; p_{l,s}=\frac{1}{d\; |\mathcal T|\; 2^{n+1}},
\end{equation}
where $|\mathcal T|$ is the cardinality of $\mathcal T$.

The above probability distributions are uniform, so the Shannon entropy of the secret key is:
\begin{eqnarray}
H(p_{SK})&=&
	-\sum_{k\in \mathcal I}\sum_{t\in \mathcal T}\sum_{l=0}^{2^{n}-1} \sum_{s\in\{L,R\}} p_k \; p_t \; p_{l,s}\log_2(p_k \; p_t \; p_{l,s})\nonumber\\[2mm]
	&=& \log_2(d \; |\mathcal T| \; 2^{n+1}) \\[2mm]
	&=& \log_2(d \; |\mathcal T| ) + n + 1.\nonumber
\end{eqnarray}
Thus, we have:
\begin{equation}
I(SK,E)\leq S(\hat{\rho}_{PK})<H(p_{SK}).
\end{equation}
With the appropriate choice of $|\mathcal T|$ and $d$, e.g.,  $|\mathcal T|, \log d \approx \poly(n)$, for sufficiently large $n$, the Shannon entropy of the secret  key has a polynomial overhead over the von Neumann entropy of the public key as seen by Eve,  \begin{equation}H(p_{SK}) - S(\hat{\rho}_{PK}) = \log_2(d \; |\mathcal T| ) \approx \poly(n).
\end{equation}
This way, upon obtaining the maximal possible information about the secret key, given by $S(\hat{\rho}_{PK})$, Eve's uncertainty (in the number of bits) of the $SK$ is still polynomial in $n$, i.e., the number of keys consistent with the information she has is exponential in $n$. We note that the choice of $d \approx \exp(n)$ 
secures the secrecy of the encrypted message, while $|\mathcal T|  \approx \poly(n)$ was chosen to maintain the protocol's efficiency, discussed in the next section.

For the rest of this section we discuss the security of the message $m$ during the second phase of the protocol, when Bob sends the encrypted message $\Ket{\psi(m)} = (\hat{T}_m\otimes \hat{I}_c) \Ket{\psi_{PK}}$ to Alice. Without knowing the secret key, the state perceived by Eve is still a complete mixture: 
\begin{eqnarray}
\hat{\rho}_E 
	&=&  (\hat{T}_{m}\otimes \hat{I}_c) \left(\frac{1}{2^{n+1}}  \hat{I}_p \otimes \hat{I}_c\right) (\hat {T}_m\otimes \hat{I}_c)^{\dagger}= \frac{1}{2^{n+1}}\hat{I}_p \otimes \hat{I}_c.
\end{eqnarray}
The most that Eve can learn is the very quantum state $\Ket{\psi(m)}$ (although, as proven above, even that is impossible, unless with negligible probability). Nevertheless, without knowing the secret key, this information is not enough for Eve to infer the message encrypted by Bob. This is a simple consequence of the fact that for each allowed encryption state, there exists a suitably chosen secret key that can decrypt {\em any} message $m$. Indeed, a state $\Ket{\psi(m)}$ that for the secret key $SK = (\hat U_k, t, l)$ corresponds to the message $m$, for the secret key $SK' = (\hat U_k, t, l - \Delta l)$ corresponds to the message $m + \Delta l$ (below, the subscripts $SK$ and $SK'$ explicitly denote the secret key used to encrypt the corresponding messages $m$ and $m + \Delta l$, respectively):
\begin{eqnarray}
\Ket{\psi(m)}_{SK} 	& = & (\hat{T}_m\otimes \hat{I}_c) \Ket{\psi_{PK}} \nonumber \\
			& = & (\hat{T}_{m}\otimes \hat{I}_c) \hat{U}_k^t  (\hat{T}_{\Delta l}\otimes \hat{I}_c)\Ket{l - \Delta l}\Ket{s} \\
			& = & (\hat{T}_{m + \Delta l}\otimes \hat{I}_c) \hat{U}_k^t  \Ket{l - \Delta l}\Ket{s} \nonumber \\
			& = & \Ket{\psi(m + \Delta l)}_{SK'}. \nonumber 
\end{eqnarray}

%
%

\subsection{Efficiency of the protocol}


In this section we show the efficiency of the proposed protocol, i.e. the overall time $\tau$ required for its execution (public-key generation, message encryption and message decryption) scales polynomially with the length $n$ of the message. 

The public-key generation as well as the message decryption are efficient procedures, since the quantum walks used to perform them are efficient.
Indeed, denoting by $\Delta\tau_w$ the time required for a single step $\hat U$ of the walk, the full walk $\hat U^t$ is completed in time $\tau=t\cdot\Delta\tau_w$. In the previous section we took $t \approx \poly(n)$ for security purposes, a choice which is also adequate for the efficiency of the quantum walk: the time required to perform the walk is polynomial in $n$.

In addition to this, for the overall protocol to be efficient, the message encryption, given by the translation operator  $\hat{T}_m$, has to be efficient as well. It might seem at first that the encryption of the message is not efficient, as it requires $\O(2^n)$ single-position translations, $\hat{T}_m=(\hat{T}_1)^m$. Below, we show that this is not necessarily a non-efficient procedure, i.e., various practical implementations of $\hat{T}_m$ are indeed efficient.

In case the system that performs the quantum walk consists of $n+1$ qubits ($n$ carrying the position of the walker plus the coin one), such that the states of the computational basis encode different positions (see for example~\cite{rya:laf:boi;laf:05}), the translation operator $\hat{T}_m$ is nothing but the addition by $m$, which is an efficient operation for a quantum computer.
Alternatively, in those cases of physical realizations in which different position states $\ket i$ are given by distinct spatial positions (see for example implementations based on integrated photonics~\cite{san:etal:12}), Bob can simply re-label the positions on the device that carries the quantum state of the public key, i.e. $i \rightarrow i-m$, which is also efficient, as he can do it in parallel at the same time for all position states.

\section{Conclusions}
\label{sec:conclusions}
In this paper we presented a quantum public-key cryptographic system based on quantum walks. Unlike the recently proposed protocol~\cite{nik:08} that uses single-qubit rotations to generate the public key, in our scheme the execution of a quantum walk, in general, results in entangled quantum states as public keys, thus increasing the practical security of our scheme (an eavesdropper has to, in general, perform more complex operations to extract information from entangled rather than from product states). Using Holevo's theorem, we proved the protocol's security. We also analyzed the complexity of our public-key generation and message encryption/decryption and showed their efficiency, i.e., that the complexity of our protocol scales polynomially with the size of the message.

Future research includes designing other security protocols based on the use of quantum walks, such as oblivious transfer (along the lines of the recently proposed protocol presented in~\cite{pmat:npaunkovic:jrodr:asouto:14}), commitment schemes and other privacy functionalities.

\section*{Acknowledgements}
This work was partially supported, under the CV-Quantum initiative at IT, by FCT PEst-OE/EEI/LA0008/2013 and UID/EEA/50008/2013, namely via the FCT CaPri initiative. 
We also acknowledge bilateral scientific co-operation between Portugal and Serbia through the project ``Noise and measurement errors in multi-party quantum security protocols'', no. 451-03-01765/2014-09/04.
C. Vlachou acknowledges the support of DP-PMI and FCT (Portugal) through the scholarship PD/BD/52652/2014.
J. Rodrigues and A. Souto acknowledge, respectively, the FCT grants SFRH/BD/75085/2010 and SFRH/BPD/76231/2011. 
N. Paunkovi\'c acknowledges discussions with D. Arsenovi\'c, D. Popovi\'c and S. Prvanovi\'c.
\bibliographystyle{unsrt}
\bibliography{quantum}

\end{document}